\newtheorem{theorem}{Theorem}
\newtheorem{acknowledgement}[theorem]{Acknowledgement}
\newtheorem{definition}[theorem]{Definition}
\newtheorem{lemma}[theorem]{Lemma}
\newtheorem{notation}[theorem]{Notation}
\newtheorem{proposition}[theorem]{Proposition}
\newenvironment{proof}[1][Proof]{\noindent\textbf{#1.} }{\ \rule{0.5em}{0.5em}}
\begin{document}

\title{A Deformation Quantization Theory for Non-Commutative Quantum Mechanics }
\author{Nuno Costa Dias \ \ \ Maurice de Gosson
\and Franz Luef \ \ \ Jo\~{a}o Nuno Prata}
\maketitle

\begin{abstract}
We show that the deformation quantization of non-commutative quantum mechanics
previously considered by Dias and Prata can be expressed as a Weyl calculus on
a double phase space. We study the properties of the star-product thus
defined, and prove a spectral theorem for the star-genvalue equation using an
extension of the methods recently initiated by de Gosson and Luef.

\end{abstract}

\textbf{Mathematics Subject Classification\ (2000)}: 47G30, 81S10

\textbf{Keywords:} deformation quantization; non-commutative quantum
mechanics; Weyl operators, spectral properties.

\section{Introduction}

The generalization of quantum mechanics obtained by considering canonical
extensions of the Heisenberg algebra is usually referred to as non-commutative
quantum mechanics (NCQM), a theory that displays an additional non-commutative
structure in the configurational and momentum sectors. One of the main
incentives for studying NCQM comes from the quest for a theory of quantum
gravity. It is widely expected that such a theory will determine a
modification of the structure of space-time of some non-commutative nature
\cite{Delduc,Douglas,Seiberg,Szabo}. Hence, deviations from the predictions of
standard quantum mechanics, and particularly those arising from considering
its non-commutative extensions, could be regarded as a sign of the underlying
theory of quantum gravity. In Dias and Prata \cite{dipra1,badipr} two of us
have discussed various aspects of NCQM related to Flato--Sternheimer
deformation quantization \cite{BFFLS1,BFFLS2}. In this paper we propose an
operator theoretical approach, based on previous work de Gosson and Luef
\cite{GOLU1} (the remaining two of us). In that article it was shown that the
Moyal--Groenewold product $a\star b$ of two functions on $\mathbb{R}^{2n}$ can
be interpreted in terms of a Weyl calculus on $\mathbb{R}^{2n}$. In fact,
\begin{equation}
a\star b=\widetilde{A}b \label{1}%
\end{equation}
where $\widetilde{A}$ is the phase space operator with Weyl symbol
$\widetilde{a}$ defined on $\mathbb{R}^{2n}\oplus\mathbb{R}^{2n}$ by
\begin{equation}
\widetilde{a}(z,\zeta)=a(z-\tfrac{1}{2}J\zeta) \label{3}%
\end{equation}
($J$ is the standard symplectic matrix).

In this paper we show that this redefinition of the starproduct can be
modified so that it leads to a natural notion of deformation quantization for
the NCQM associated with an antisymmetric matrix of the type
\begin{equation}
\Omega=%
\begin{pmatrix}
\hbar^{-1}\Theta & I\\
-I & \hbar^{-1}N
\end{pmatrix}
\label{omega}%
\end{equation}
where $\Theta$ and $N$ measure the non-commutativity in the position and
momentum variables, respectively. We define a new starproduct $\star_{\Omega}$
by replacing formula (\ref{1}) by
\begin{equation}
a\star_{\Omega}b=\widetilde{A}_{\Omega}b \label{astarom}%
\end{equation}
where $\widetilde{A}_{\Omega}$ is the operator with Weyl symbol
\begin{equation}
\widetilde{a}_{\Omega}(z,\zeta)=a(z-\tfrac{1}{2}\Omega\zeta). \label{atild1}%
\end{equation}
Of course (\ref{atild1}) reduces to (\ref{1}) when $\Theta=N=0$.

In this article we are going to rigorously justify the definition above and
study the properties of this new starproduct $\star_{\Omega}$. The difficulty
associated with the fact that the symplectic form associated with $\Omega$
depends on $\hbar$ will be resolved (we will show that $a\star_{\Omega}b$ is
well-defined as a starproduct thanks to supplementary conditions on $\Theta$
and $N$ which are physically meaningful). In fact \ $\star_{\Omega}$ coincides
with the starproduct defined (in terms of the generalized Weyl-Wigner map
\cite{Dias1}) in Eqn. (21) of \cite{dipra1}, and where it was shown that it is
related to the standard starproduct. (In the same paper it was concluded in
Eqn. (53) that the generalized starproduct between two polynomials can be
represented as a kind of \textquotedblleft Bopp shift\textquotedblright\ which
also turns out to be identical with $\star_{\Omega}$).

\begin{notation}
The generic point of phase space $\mathbb{R}^{2n}$ is denoted $z=(x,p)$. We
denote by $\operatorname*{Sp}(2n,\mathbb{R})$ the standard symplectic group,
defined as the group of linear automorphisms of $\mathbb{R}^{2n}$ equipped
with the symplectic form $\sigma(z,z^{\prime})=Jz\cdot z^{\prime}$, $J=%
\begin{pmatrix}
0 & I\\
-I & 0
\end{pmatrix}
$. We use the standard notation $\mathcal{S}(\mathbb{R}^{m})$ and
$\mathcal{S}^{\prime}(\mathbb{R}^{m})$ for the Schwartz space of test
functions on $\mathbb{R}^{m}$ and its dual.
\end{notation}

\section{Description Of the Problem}

Let us begin by explaining what we mean by non-commutativity in the present
context. The study of non-commutative field theories and their connections
with quantum gravity (see \cite{Binz,Delduc,Douglas,Seiberg,Szabo} and the
references therein) leads to the consideration of commutation relations of the
type
\begin{equation}
\lbrack\widetilde{z}_{\alpha},\widetilde{z}_{\beta}]=i\hbar\omega_{\alpha
\beta}\text{ ,\ }1\leq\alpha,\beta\leq2n \label{CCR2}%
\end{equation}
where $\Omega=(\omega_{\alpha\beta})_{1\leq\alpha,\beta\leq2n}$ is the
$2n\times2n$ antisymmetric matrix defined by (\ref{omega}) where
$\Theta=(\theta_{\alpha\beta})_{1\leq\alpha,\beta\leq n}$ and $N=(\eta
_{\alpha\beta})_{1\leq\alpha,\beta\leq n}$ are antisymmetric matrices
measuring the non-commutativity in the position and momentum variables. We
have set here $\widetilde{z}_{\alpha}=\widetilde{x}_{\alpha}$ if $1\leq
\alpha\leq n$ and $\widetilde{z}_{\alpha}=\widetilde{p}_{\alpha-n}$ if
$n+1\leq\alpha\leq2n$, where
\begin{align}
\widetilde{x}_{\alpha}  &  =x_{\alpha}+\tfrac{1}{2}i\sum\nolimits_{\beta
}\theta_{\alpha\beta}\partial_{x_{\beta}}+\tfrac{1}{2}i\hbar\partial
_{p_{\alpha}}\label{ccr31}\\
\widetilde{p}_{\alpha}  &  =p_{\alpha}-\tfrac{1}{2}i\hbar\partial_{x_{\alpha}%
}+\tfrac{1}{2}i\sum\nolimits_{\beta}\eta_{\alpha\beta}\partial_{p_{\beta}}.
\label{ccr32}%
\end{align}
It turns out that, as proved in \cite{dipra1}, $\Omega$ is invertible if
\begin{equation}
\theta_{\alpha\beta}\eta_{\gamma\delta}<\hbar^{2}\text{ \ for \ }1\leq
\alpha<\beta\leq n\text{ and }1\leq\gamma<\delta\leq n.\text{ } \label{det}%
\end{equation}
We will assume from now on that these conditions are satisfied; that this
requirement is physically meaningful is well-known (it is fulfilled for
instance in the case of the non-commutative quantum well; see for instance
\cite{bracz,cahakola}). Since we will be concerned with a deformation
quantization with parameter $\hbar\rightarrow0$ we will furthermore assume
that $\Theta$ and $N$ depend smoothly on $\hbar$ in such a way that
\begin{equation}
\Theta(\hbar)=o(\hbar^{2})\text{ \ and }N(\hbar)=o(\hbar^{2}) \label{oh}%
\end{equation}
(recall that $f(\hbar)=o(\hbar^{m})$ means that $\lim_{\hbar\rightarrow
0}(f(\hbar)/\hbar^{m})=0$). We thus have
\[
\lim_{\hbar\rightarrow0}\Omega=J=%
\begin{pmatrix}
0 & I\\
-I & 0
\end{pmatrix}
\]
(the standard symplectic matrix). It turns out that the conditions (\ref{oh})
are compatible with numerical results in \cite{bracz,cahakola} where it is
shown that the estimates $\theta\leq4\times10^{-40}$ $m^{2}$ and $\eta
\leq1.76\times10^{-61}$ $kg^{2}m^{2}s^{-2}$ hold. Moreover, the analysis of
non-commutative quantum mechanics in the context of dissipative open systems,
reveals that a transition $\theta\rightarrow0$ occurs prior to $\hbar
\rightarrow0$ \cite{Dias2}.

These facts, and the theory developed in \cite{GOLU1}, suggests that we
represent $\widetilde{z}=(\widetilde{z}_{1},...,\widetilde{z}_{2n})$ by the
vector operator%
\begin{equation}
\widetilde{z}=z+\tfrac{1}{2}i\hbar\Omega\partial_{z} \label{cc4}%
\end{equation}
which acts on functions defined on the phase space $\mathbb{R}^{2n}$. Notice
that the conditions (\ref{oh}) show that in the limit $\hbar\rightarrow0$ we
have the asymptotic formulae
\begin{equation}
\widetilde{x}_{\alpha}=x_{\alpha}+\tfrac{1}{2}i\hbar\partial_{p_{\alpha}%
}+o(\hbar^{2})\text{ \ , \ }\widetilde{p}_{\alpha}=p_{\alpha}-\tfrac{1}%
{2}i\hbar\partial_{x_{\alpha}}+o(\hbar^{2}). \label{asy1}%
\end{equation}
The \textquotedblleft quantization rules\textquotedblright\ (\ref{cc4}) lead
us to the consideration of pseudo-differential operators formally defined by
(\ref{atild1}).

The underlying symplectic structure we are going to use is defined as follows.
We will denote by $s$ a linear automorphism of $\mathbb{R}^{2n}$ such that
$\sigma=s^{\ast}\omega$; equivalently $sJs^{T}=\Omega$. Thus $s$ is a
symplectomorphism $s:(\mathbb{R}^{2n},\sigma)\longrightarrow(\mathbb{R}%
^{2n},\omega)$. Note that the mapping $s$ is sometimes called the
\textquotedblleft Seiberg--Witten map\textquotedblright\ in the physical
literature; its existence is of course mathematically a triviality (because it
is just a linear version of Darboux's theorem, see \cite{Birk}, \S 1.1.2).
Writing $s$ in block-matrix form $%
\begin{pmatrix}
A & B\\
C & D
\end{pmatrix}
$ the condition $sJs^{T}=\Omega$ is equivalent to
\[
AB^{T}-BA^{T}=\hbar^{-1}\Theta\text{ \ , \ }CD^{T}-DC^{T}=\hbar^{-1}N\text{
\ , \ }AD^{T}-BC^{T}=I.
\]
Of course, the automorphism $s$ is not uniquely defined: if $s^{\ast}%
\omega=s^{\prime\ast}\omega$ then $s^{-1}s^{\prime}\in\operatorname*{Sp}%
(2n,\mathbb{R})$. Also note that in the limit $\hbar\rightarrow0$ the matrices
$\hbar^{-1}\Theta$ and $\hbar^{-1}N$ vanish and $s$ becomes, as expected,
symplectic in the usual sense, that is $s\in\operatorname*{Sp}(2n,\mathbb{R})$.

\section{Definition of the starproduct $\star_{\Omega}$}

Let $\omega$ be the symplectic form on $\mathbb{R}^{2n}$ defined by
$\omega(z,z^{\prime})=z\cdot\Omega^{-1}z^{\prime}$; it coincides with the
standard symplectic form $\sigma$ when $\Omega=J$.

We will need the two following unitary transformations:

\begin{itemize}
\item The $\Omega$-symplectic transform $F_{\Omega}$ defined, for
$a\in\mathcal{S}(\mathbb{R}^{2n})$, by
\begin{equation}
F_{\Omega}a(z)=\left(  \tfrac{1}{2\pi\hbar}\right)  ^{n}|\det\Omega
|^{-1/2}\int_{\mathbb{R}^{2n}}e^{-\frac{i}{\hbar}\omega(z,z^{\prime}%
)}a(z^{\prime})dz^{\prime}; \label{aoum}%
\end{equation}
it extends into an involutive automorphism of $\mathcal{S}^{\prime}%
(\mathbb{R}^{2n})$ (also denoted by $F_{\Omega}$) and whose restriction to
$L^{2}(\mathbb{R}^{2n})$ is unitary;

\item The unitary operator $\widetilde{T}_{\Omega}(z_{0})$ defined, for
$\Psi\in\mathcal{S}^{\prime}(\mathbb{R}^{2n})$ by the formula
\begin{equation}
\widetilde{T}_{\Omega}(z_{0})\Psi(z)=e^{-\frac{i}{\hbar}\omega(z,z_{0})}%
\Psi(z-\tfrac{1}{2}z_{0}). \label{hwbis}%
\end{equation}
Notice that when $\Omega=J$ we have $\widetilde{T}_{\Omega}(z_{0}%
)=\widetilde{T}(z_{0})$ where $\widetilde{T}(z_{0})$ is defined by formula (8)
in \cite{GOLU1}.
\end{itemize}

Let us express the operator $\widetilde{A}_{\Omega}=a(z+\tfrac{1}{2}%
i\hbar\Omega\partial_{z})$ in terms of $F_{\Omega}a$ and $\widetilde
{T}_{\Omega}(z_{0})$.

\begin{proposition}
Let $\widetilde{A}_{\Omega}$ be the operator on $\mathbb{R}^{2n}$ with Weyl
symbol
\begin{equation}
\widetilde{a}_{\Omega}(z,\zeta)=a(z-\tfrac{1}{2}\Omega\zeta). \label{atilde}%
\end{equation}
We have%
\begin{equation}
\widetilde{A}_{\Omega}=\left(  \tfrac{1}{2\pi\hbar}\right)  ^{n}|\det
\Omega|^{-1/2}\int_{\mathbb{R}^{2n}}F_{\Omega}a(z)\widetilde{T}_{\Omega}(z)dz.
\label{aomeg}%
\end{equation}

\end{proposition}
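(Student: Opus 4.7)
The plan is a direct computation: start from the standard Weyl quantization formula applied to the symbol $\widetilde{a}_{\Omega}(z,\zeta) = a(z - \tfrac{1}{2}\Omega\zeta)$ and transform the resulting oscillatory integral, via two changes of variables exploiting the antisymmetry of $\Omega$, into the integral appearing in (\ref{aomeg}). We may assume $a \in \mathcal{S}(\mathbb{R}^{2n})$ (and $\Psi \in \mathcal{S}(\mathbb{R}^{2n})$) so all integrals are absolutely convergent; the identity then extends to $\mathcal{S}'(\mathbb{R}^{2n})$ by duality since both sides of (\ref{aomeg}) continuously depend on $a$.

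First I would write out, for $\Psi\in\mathcal{S}(\mathbb{R}^{2n})$, the defining Weyl-integral
\begin{equation*}
\widetilde{A}_{\Omega}\Psi(z) = \left(\tfrac{1}{2\pi\hbar}\right)^{2n}\!\!\iint a\!\left(\tfrac{z+z'}{2} - \tfrac{1}{2}\Omega\zeta\right) e^{\frac{i}{\hbar}\zeta\cdot(z-z')}\Psi(z')\,dz'\,d\zeta
\end{equation*}
and perform the translation $z_0 = z-z'$. Next I would linearize the argument of $a$ by the substitution $\eta = \Omega\zeta$, which produces the Jacobian factor $|\det\Omega|^{-1}$ and, using the antisymmetry identities $\zeta^{T}\Omega\zeta = 0$ and $\zeta\cdot z_0 = (\Omega^{-1}\eta)\cdot z_0 = \omega(z_0,\eta)$, followed by the shift $y = z_0 + \eta$, would give
\begin{equation*}
\widetilde{A}_{\Omega}\Psi(z) = \left(\tfrac{1}{2\pi\hbar}\right)^{2n}\!|\det\Omega|^{-1}\!\!\iint a(z-\tfrac{y}{2})\,e^{\frac{i}{\hbar}\omega(z_0,y)}\Psi(z-z_0)\,dz_0\,dy.
\end{equation*}

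Then I would rename $z' = z - y/2$ (producing a Jacobian $2^{2n}$) and rescale $z_0 \to z_0/2$ (producing $2^{-2n}$); the two factors cancel and, using the bilinearity and antisymmetry of $\omega$ to rewrite $\omega(z_0,z) - \omega(z_0,z') = -\omega(z,z_0) - \omega(z_0,z')$, the expression becomes
\begin{equation*}
\widetilde{A}_{\Omega}\Psi(z) = \left(\tfrac{1}{2\pi\hbar}\right)^{2n}\!|\det\Omega|^{-1}\!\!\iint a(z')e^{-\frac{i}{\hbar}[\omega(z,z_0)+\omega(z_0,z')]}\Psi(z-\tfrac{z_0}{2})\,dz_0\,dz'.
\end{equation*}
Factoring the $z'$-integral recognizes it as $(2\pi\hbar)^{n}|\det\Omega|^{1/2}\,F_{\Omega}a(z_0)$ by (\ref{aoum}), while the remaining factor $e^{-\frac{i}{\hbar}\omega(z,z_0)}\Psi(z-z_0/2)$ is exactly $\widetilde{T}_{\Omega}(z_0)\Psi(z)$ by (\ref{hwbis}); this delivers (\ref{aomeg}).

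The only real obstacle is purely a bookkeeping one: tracking the powers of $2$, the factor $|\det\Omega|^{-1}$ (versus $|\det\Omega|^{-1/2}$ in the final formula), and the signs produced by antisymmetry when converting dot-products against $\Omega$ into the symplectic form $\omega$. Every other step is either a linear change of variables or an application of Fubini justified by Schwartz decay. As noted, the extension from $a\in\mathcal{S}(\mathbb{R}^{2n})$ to $a\in\mathcal{S}'(\mathbb{R}^{2n})$ is automatic once both members of (\ref{aomeg}) are viewed as continuous linear maps $\mathcal{S}'(\mathbb{R}^{2n})\to\mathcal{L}(\mathcal{S},\mathcal{S}')$, since $F_{\Omega}$ is a topological automorphism of $\mathcal{S}'(\mathbb{R}^{2n})$ and $z_0\mapsto\widetilde{T}_{\Omega}(z_0)$ is strongly continuous.
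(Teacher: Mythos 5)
Your proof is correct and the approach is essentially the same as the paper's --- a direct computation via changes of variables and the antisymmetry of $\Omega$ --- but run in the opposite direction: the paper computes the Schwartz kernel and then the Weyl symbol of the operator defined by the right-hand side of (\ref{aomeg}) and checks, using $F_{\Omega}F_{\Omega}=\mathrm{Id}$, that it equals $\widetilde{a}_{\Omega}$, whereas you expand the Weyl quantization integral for $\widetilde{a}_{\Omega}$ acting on a test function and transform it into the integral on the right-hand side. Either direction works and the bookkeeping is of comparable weight; your remark about extending from $\mathcal{S}$ to $\mathcal{S}'$ by continuity is a sensible addition that the paper leaves implicit.
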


\begin{proof}
Let us denote by $\widetilde{B}$ the right-hand side of (\ref{aomeg}). We
have, setting $u=z-\frac{1}{2}z_{0}$,%
\begin{align*}
\widetilde{B}\Psi(z)  &  =\left(  \tfrac{1}{2\pi\hbar}\right)  ^{n}|\det
\Omega|^{-1/2}\int_{\mathbb{R}^{2n}}F_{\Omega}a(z_{0})e^{-\frac{i}{\hbar
}\omega(z,z_{0})}\Psi(z-\tfrac{1}{2}z_{0})dz_{0}\\
&  =\left(  \tfrac{2}{\pi\hbar}\right)  ^{n}|\det\Omega|^{-1/2}\int
_{\mathbb{R}^{2n}}F_{\Omega}a[2(z-u)]e^{\frac{2i}{\hbar}\omega(z,u)}\Psi(u)du
\end{align*}
hence the kernel of $\widetilde{B}$ is given by%
\[
K(z,u)=\left(  \tfrac{2}{\pi\hbar}\right)  ^{n}|\det\Omega|^{-1/2}F_{\Omega
}a[2(z-u)]e^{\frac{2i}{\hbar}\omega(z,u)}.
\]
It follows that the Weyl symbol $\widetilde{b}$ of $\widetilde{B}$ is given by%
\begin{align*}
\widetilde{b}(z,\zeta)  &  =\int_{\mathbb{R}^{2n}}e^{-\frac{i}{\hbar}%
\zeta\cdot\zeta^{\prime}}K(z+\tfrac{1}{2}\zeta^{\prime},z-\tfrac{1}{2}%
\zeta^{\prime})d\zeta^{\prime}\\
&  =\left(  \tfrac{2}{\pi\hbar}\right)  ^{n}|\det\Omega|^{-1/2}\int
_{\mathbb{R}^{2n}}e^{-\frac{i}{\hbar}\zeta\cdot\zeta^{\prime}}F_{\Omega
}a(2\zeta^{\prime})e^{-\frac{2i}{\hbar}\omega(z,\zeta^{\prime})}d\zeta
^{\prime}%
\end{align*}
that is, using the obvious relation
\[
\zeta\cdot\zeta^{\prime}+2\omega(z,\zeta^{\prime})=\omega(2z-\Omega\zeta
,\zeta^{\prime})
\]
together with the change of variables $z^{\prime}=2\zeta^{\prime}$,
\begin{align*}
\widetilde{b}(z,\zeta)  &  =\left(  \tfrac{2}{\pi\hbar}\right)  ^{n}%
|\det\Omega|^{-1/2}\int_{\mathbb{R}^{2n}}e^{-\frac{i}{\hbar}\omega
(2z-\Omega\zeta,\zeta^{\prime})}F_{\Omega}a(2\zeta^{\prime})d\zeta^{\prime}\\
&  =\left(  \tfrac{1}{2\pi\hbar}\right)  ^{n}|\det\Omega|^{-1/2}%
\int_{\mathbb{R}^{2n}}e^{-\frac{i}{\hbar}\omega(z-\frac{1}{2}\Omega
\zeta,z^{\prime})}F_{\Omega}a(z^{\prime})dz^{\prime}%
\end{align*}
that is, using the fact that $F_{\Omega}F_{\Omega}$ is the identity,%
\[
\widetilde{b}(z,\zeta)=a(z-\tfrac{1}{2}\Omega\zeta)=\widetilde{a}_{\Omega
}(z,\zeta)
\]
which concludes the proof.
\end{proof}

The result above motivates the following definition:

\begin{definition}
Let $a\in\mathcal{S}^{\prime}(\mathbb{R}^{2n})$ and $b\in\mathcal{S}%
(\mathbb{R}^{2n})$ . The $\Omega$-starproduct of $a$ and $b$ is the element of
$\mathcal{S}^{\prime}(\mathbb{R}^{2n})$ defined by%
\begin{equation}
a\star_{\Omega}b=\widetilde{A}_{\Omega}b. \label{defstar}%
\end{equation}

\end{definition}

Note that it is not yet clear from the definition above that $\star_{\Omega}$
is a bona fide starproduct. For instance, while it is obvious that
$1\star_{\Omega}b=b$ (because the operator $\widetilde{A}_{\Omega}$ with
symbol $a=1$ is the identity), the formula $b\star_{\Omega}1=b$ is certainly
not, and it is even less clear that $\star_{\Omega}$ is associative!

\section{A New Star-Product Is Born...}

It turns out that we can reduce the study of the newly defined starproduct to
that of the usual Groenewold--Moyal product $\star$. For this we will need
Lemma \ref{lemma3} below.

\begin{lemma}
\label{lemma3}Let $s$ be a linear automorphism such that $\sigma=s^{\ast
}\omega$ and define a automorphism $M_{s}:\mathcal{S}^{\prime}(\mathbb{R}%
^{2n})\longrightarrow\mathcal{S}^{\prime}(\mathbb{R}^{2n})$ by%
\begin{equation}
M_{s}\Psi(z)=\sqrt{|\det s|}\Psi(sz). \label{ms}%
\end{equation}
(hence $M_{s}$ is unitary on $L^{2}(\mathbb{R}^{2n})$). We have
\begin{equation}
M_{s}\widetilde{A}_{\Omega}=\widetilde{A^{\prime}}M_{s} \label{intera}%
\end{equation}
\ where $\widetilde{A^{\prime}}=\widetilde{A^{\prime}}_{J}$ corresponds to the
operator $\widehat{A^{\prime}}$ acting on $L^{2}(\mathbb{R}^{n})$ with Weyl
symbol $a^{\prime}(z)=a(sz)$, and hence
\begin{equation}
\text{\ }M_{s}(a\star_{\Omega}b)=\sqrt{|\det s|}(a^{\prime}\star b^{\prime})
\label{inter1}%
\end{equation}
where $b^{\prime}(z)=b(sz)$.
\end{lemma}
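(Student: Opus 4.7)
My plan is to use the integral representation established in the preceding Proposition,
\[
\widetilde{A}_{\Omega}=\left(\tfrac{1}{2\pi\hbar}\right)^{n}|\det\Omega|^{-1/2}\int_{\mathbb{R}^{2n}}F_{\Omega}a(z)\,\widetilde{T}_{\Omega}(z)\,dz,
\]
conjugate both sides by $M_{s}$, and reduce the integrand to the corresponding ingredients for $\Omega=J$ applied to $a'(z)=a(sz)$. The whole argument rests on two elementary identities: the relation $\omega(sz,z')=\sigma(z,s^{-1}z')$ (which is just the defining condition $\sigma=s^{\ast}\omega$ in dual form), and the Jacobian identity $|\det s|=|\det\Omega|^{1/2}$ (which follows from $sJs^{T}=\Omega$ together with $\det J=1$). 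These two facts are what make all constants cancel cleanly at the end.

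First, I would compute the conjugate $M_{s}\widetilde{T}_{\Omega}(z_{0})M_{s}^{-1}$. A direct substitution gives
\[
M_{s}\widetilde{T}_{\Omega}(z_{0})M_{s}^{-1}\Psi(z)=e^{-\frac{i}{\hbar}\omega(sz,z_{0})}\Psi\!\left(z-\tfrac{1}{2}s^{-1}z_{0}\right),
\]
and using $\omega(sz,z_{0})=\sigma(z,s^{-1}z_{0})$ one recognises the right-hand side as $\widetilde{T}(s^{-1}z_{0})\Psi(z)$, where $\widetilde{T}$ is the operator of \cite{GOLU1} (i.e.\ $\widetilde{T}_{\Omega}$ with $\Omega=J$).

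Second, I would relate $F_{\Omega}a\circ s$ to $Fa'$. Starting from the definition of $Fa'(w)$ and making the substitution $w'\mapsto s^{-1}z'$ in the defining integral, one gets
\[
Fa'(w)=\left(\tfrac{1}{2\pi\hbar}\right)^{n}|\det s|^{-1}\int e^{-\frac{i}{\hbar}\omega(sw,z')}a(z')\,dz'=\frac{|\det\Omega|^{1/2}}{|\det s|}F_{\Omega}a(sw),
\]
which because $|\det s|=|\det\Omega|^{1/2}$ collapses to the clean identity $F_{\Omega}a(sw)=Fa'(w)$.

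Finally, I would substitute these two identities into the integral representation. Conjugating by $M_{s}$, then changing variables $z=sw$ (which contributes a factor $|\det s|$), yields
\[
M_{s}\widetilde{A}_{\Omega}M_{s}^{-1}=\left(\tfrac{1}{2\pi\hbar}\right)^{n}|\det\Omega|^{-1/2}|\det s|\int F_{\Omega}a(sw)\widetilde{T}(w)\,dw=\left(\tfrac{1}{2\pi\hbar}\right)^{n}\int Fa'(w)\widetilde{T}(w)\,dw,
\]
which by the Proposition applied to $\Omega=J$ is precisely $\widetilde{A'}$. This establishes (\ref{intera}). For (\ref{inter1}), I apply both sides of (\ref{intera}) to $b$; since $M_{s}b=\sqrt{|\det s|}\,b'$, the identity $M_{s}\widetilde{A}_{\Omega}b=\widetilde{A'}M_{s}b$ reads $M_{s}(a\star_{\Omega}b)=\sqrt{|\det s|}(a'\star b')$.

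The only delicate point is bookkeeping: keeping track of the Jacobian $|\det s|$ that appears when changing variables, the factor $|\det\Omega|^{-1/2}$ built into $F_{\Omega}$, and the $\sqrt{|\det s|}$ in $M_{s}$. The identity $|\det s|^{2}=|\det\Omega|$ is what reconciles everything; once that is in hand the computation is essentially mechanical.
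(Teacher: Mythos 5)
Your proposal is correct and takes essentially the same route as the paper: both conjugate the integral representation $\widetilde{A}_{\Omega}=\left(\tfrac{1}{2\pi\hbar}\right)^{n}|\det\Omega|^{-1/2}\int F_{\Omega}a(z)\widetilde{T}_{\Omega}(z)\,dz$ by $M_{s}$ using the same two intertwining facts (your identities $M_{s}\widetilde{T}_{\Omega}(z_{0})M_{s}^{-1}=\widetilde{T}(s^{-1}z_{0})$ and $F_{\Omega}a(sw)=Fa'(w)$ are precisely the paper's $M_{s}\widetilde{T}_{\Omega}(z_{0})=\widetilde{T}(s^{-1}z_{0})M_{s}$ and $M_{s}F_{\Omega}=F_{J}M_{s}$ in conjugated form), together with the Jacobian relation $|\det s|=|\det\Omega|^{1/2}$. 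The only difference is that you spell out the two auxiliary identities that the paper declares ``purely computational and left to the reader.''
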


\begin{proof}
Formula (\ref{inter1}) immediately follows from formula (\ref{intera}). To
prove formula (\ref{intera}) one first checks the identities
\[
M_{s}\widetilde{T}_{\Omega}(z_{0})=\widetilde{T}(s^{-1}z_{0})M_{s}\text{
\ ,\ \ }M_{s}F_{\Omega}=F_{J}M_{s}%
\]
(the verification of which is purely computational and therefore left to the
reader); using these identities we have%
\begin{align*}
M_{s}\widetilde{A}_{\Omega}  &  =\left(  \tfrac{1}{2\pi\hbar}\right)
^{n}|\det\Omega|^{-1/2}\int_{\mathbb{R}^{2n}}F_{\Omega}a(z_{0})\widetilde
{T}(s^{-1}z_{0})M_{s}dz_{0}\\
&  =\left(  \tfrac{1}{2\pi\hbar}\right)  ^{n}|\det s||\det\Omega|^{-1/2}%
\int_{\mathbb{R}^{2n}}F_{\Omega}a(sz)\widetilde{T}(z)M_{s}dz\\
&  =\left(  \tfrac{1}{2\pi\hbar}\right)  ^{n}|\det s|^{1/2}|\det\Omega
|^{-1/2}\int_{\mathbb{R}^{2n}}M_{s}F_{\Omega}a(z)\widetilde{T}(z)M_{s}dz\\
&  =\left(  \tfrac{1}{2\pi\hbar}\right)  ^{n}|\det s|^{1/2}|\det\Omega
|^{-1/2}\int_{\mathbb{R}^{2n}}F_{J}(M_{s}a)(z)\widetilde{T}(z)M_{s}dz\\
&  =\widetilde{A^{\prime}}M_{s}.
\end{align*}

\end{proof}

The double equality
\begin{equation}
1\star_{\Omega}a=a\star_{\Omega}1=a \label{unital}%
\end{equation}
now immediately follows from formula (\ref{inter1}): we have
\[
M_{s}(1\star_{\Omega}a)=\sqrt{|\det s|}(1\star a^{\prime})=1\star M_{s}%
a=M_{s}a
\]
hence we recover the equality $1\star_{\Omega}a=a$; similarly
\[
M_{s}(a\star_{\Omega}1)=\sqrt{|\det s|}(a^{\prime}\star1)=M_{s}a\star1=M_{s}a
\]
hence $a\star_{\Omega}1=a$.

Let us now prove the associativity of the $\Omega$-starproduct:

\begin{proposition}
Assume that the starproducts $a\star_{\Omega}(b\star_{\Omega}c)$ and
$(a\star_{\Omega}b)\star_{\Omega}c$ are defined. We then have
\begin{equation}
a\star_{\Omega}(b\star_{\Omega}c)=(a\star_{\Omega}b)\star_{\Omega}c.
\label{assoc}%
\end{equation}

\end{proposition}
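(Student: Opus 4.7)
My plan is to transport everything from the twisted picture to the standard symplectic one by means of Lemma \ref{lemma3}, and then invoke associativity of the ordinary Groenewold--Moyal product $\star$, which is already known (it is equivalent to the composition of Weyl operators on $L^2(\mathbb{R}^n)$).

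First, I fix a linear automorphism $s$ with $\sigma=s^{\ast}\omega$ (its existence was observed in Section 2) and write $a'(z)=a(sz)$, $b'(z)=b(sz)$, $c'(z)=c(sz)$. From the definition $M_s f(z)=\sqrt{|\det s|}\,f(sz)$ one has, for any $f\in\mathcal{S}'(\mathbb{R}^{2n})$, the relation $f'=M_s f/\sqrt{|\det s|}$. Applying this to $f=b\star_\Omega c$ and using the intertwining (\ref{inter1}) of Lemma \ref{lemma3} gives
\[
(b\star_\Omega c)'=\frac{M_s(b\star_\Omega c)}{\sqrt{|\det s|}}=b'\star c',
\]
and in exactly the same way $(a\star_\Omega b)'=a'\star b'$.

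A second application of Lemma \ref{lemma3}, now to the outer $\star_\Omega$, then yields
\begin{align*}
M_s\bigl(a\star_\Omega(b\star_\Omega c)\bigr) &= \sqrt{|\det s|}\,\bigl(a'\star (b\star_\Omega c)'\bigr)=\sqrt{|\det s|}\,\bigl(a'\star(b'\star c')\bigr),\\
M_s\bigl((a\star_\Omega b)\star_\Omega c\bigr) &= \sqrt{|\det s|}\,\bigl((a\star_\Omega b)'\star c'\bigr)=\sqrt{|\det s|}\,\bigl((a'\star b')\star c'\bigr).
\end{align*}
The right-hand sides are equal by associativity of the standard Moyal product. Since $M_s$ is a bijection of $\mathcal{S}'(\mathbb{R}^{2n})$ onto itself, equation (\ref{assoc}) follows.

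The only subtlety, rather than a genuine obstacle, is the bookkeeping of the $\sqrt{|\det s|}$ factor together with the fact that Lemma \ref{lemma3} is stated with one Schwartz input while in the outer step we feed it a tempered distribution ($b\star_\Omega c$ or $a\star_\Omega b$); this is harmless under the standing hypothesis of the proposition, which asserts exactly that the iterated $\Omega$-products on both sides are defined as elements of $\mathcal{S}'(\mathbb{R}^{2n})$.
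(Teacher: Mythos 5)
Your proof is correct and follows essentially the same route as the paper: apply $M_s$, use Lemma \ref{lemma3} twice to transport both iterated $\Omega$-products to iterated Moyal products, invoke associativity of $\star$, and conclude by injectivity of $M_s$. The extra remark about the domain bookkeeping (a tempered distribution being fed to Lemma \ref{lemma3} in the outer step) is a valid observation that the paper itself leaves implicit.
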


\begin{proof}
It is of course sufficient to show that%
\begin{equation}
M_{s}\left[  a\star_{\Omega}(b\star_{\Omega}c)\right]  =M_{s}\left[
(a\star_{\Omega}b)\star_{\Omega}c)\right]  . \label{massoci}%
\end{equation}
We have, by repeated use of (\ref{inter1}) together with the definition of
$M_{s}$,%
\begin{align*}
M_{s}\left[  a\star_{\Omega}(b\star_{\Omega}c)\right]   &  =\sqrt{|\det
s|}(a^{\prime}\star(b\star_{\Omega}c)^{\prime})\\
&  =a^{\prime}\star M_{s}(b\star_{\Omega}c)\\
&  =\sqrt{|\det s|}\left[  a^{\prime}\star(b^{\prime}\star c^{\prime})\right]
.
\end{align*}
A similar calculation yields%
\[
M_{s}\left[  (a\star_{\Omega}b)\star_{\Omega}c)\right]  =\sqrt{|\det
s|}\left[  (a^{\prime}\star b^{\prime})\star c^{\prime}\right]
\]
hence the equality (\ref{massoci}) in view of the associativity of the
Groenewold--Moyal product.
\end{proof}

That we have a deformation of a Poisson bracket follows from the following
considerations. Let us define an $\Omega$-Poisson bracket $\{\cdot
,\cdot\}_{\Omega}$ by%
\begin{equation}
\{a,b\}_{\Omega}= - \omega(X_{a,\Omega},X_{b,\Omega}) \label{poisom}%
\end{equation}
where the vector fields $X_{a,\Omega}$ and $X_{b,\Omega}$ are given by
\begin{equation}
X_{a,\Omega}=\Omega\partial_{z}a\text{ \ , \ }X_{b,\Omega}=\Omega\partial
_{z}b. \label{xab}%
\end{equation}
In particular $\{a,b\}_{\Omega}$ is the usual Poisson bracket $\{a,b\}$ and
$X_{a,J}$, $X_{a,J}$ are the usual Hamilton vector fields when $\Theta=N=0$.
We have the following asymptotic formula relating both notions of Poisson
brackets:
\begin{equation}
\{a,b\}_{\Omega}=\{a,b\}+o(\hbar)\text{ for }\hbar\rightarrow0. \label{abh}%
\end{equation}
In fact, by definitions (\ref{poisom}) and (\ref{xab}),%
\[
\{a,b\}_{\Omega}= - X_{a,\Omega}\cdot\Omega^{-1}X_{b,\Omega}= - \Omega
\partial_{z}a\cdot\partial_{z}b
\]
that is%
\[
\{a,b\}_{\Omega}=\{a,b\} - \hbar^{-1}(\Theta\partial_{x}a\cdot\partial
_{x}b+N\partial_{p}a\cdot\partial_{p}b)
\]
from which (\ref{abh}) follows in view of the conditions (\ref{oh}).

\begin{proposition}
We have
\begin{equation}
a\star_{\Omega}b-b\star_{\Omega}a=i\hbar\{a,b\}+O(\hbar^{2}). \label{poisdef}%
\end{equation}

\end{proposition}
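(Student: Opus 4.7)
The plan is to reduce the $\Omega$-starproduct commutator to the ordinary Groenewold--Moyal commutator via the intertwining relation of Lemma~\ref{lemma3}, apply the classical semiclassical expansion of the Moyal bracket, and then transport the result back using the symplectomorphism $s$; at the end I invoke (\ref{abh}) to trade $\{a,b\}_{\Omega}$ for the standard Poisson bracket.

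Concretely, Lemma~\ref{lemma3} gives $M_{s}(a\star_{\Omega}b)=\sqrt{|\det s|}(a'\star b')$ with $a'(z)=a(sz)$, $b'(z)=b(sz)$. Unpacking the definition of $M_{s}$ this is the pointwise identity $(a\star_{\Omega}b)(sz)=(a'\star b')(z)$, and similarly for $b\star_{\Omega}a$. Subtracting,
\[
(a\star_{\Omega}b-b\star_{\Omega}a)(sz)=(a'\star b'-b'\star a')(z).
\]
The standard Moyal commutator expansion (whose series contains only odd powers of $\hbar$) then gives
\[
a'\star b'-b'\star a'=i\hbar\{a',b'\}+O(\hbar^{3}),
\]
where $\{\cdot,\cdot\}$ denotes the ordinary Poisson bracket associated with $\sigma$.

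Since $s$ is by construction a symplectomorphism from $(\mathbb{R}^{2n},\sigma)$ to $(\mathbb{R}^{2n},\omega)$, Poisson brackets pull back covariantly: one checks directly from (\ref{poisom}) that $\{a',b'\}(z)=\{a,b\}_{\Omega}(sz)$. Substituting and letting $z$ range (equivalently, composing with $s^{-1}$) yields
\[
a\star_{\Omega}b-b\star_{\Omega}a=i\hbar\{a,b\}_{\Omega}+O(\hbar^{3}),
\]
and the asymptotic (\ref{abh}) converts this into $i\hbar\{a,b\}+i\hbar\cdot o(\hbar)+O(\hbar^{3})=i\hbar\{a,b\}+O(\hbar^{2})$, which is exactly (\ref{poisdef}).

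The main delicate point is that $s$ itself depends on $\hbar$ through $\Omega$, so the implicit constants in the Moyal asymptotic must be shown to remain uniformly bounded as $\hbar\to0$. This is ensured by the hypotheses (\ref{oh}): they force $\Omega\to J$ and hence $s$ to converge smoothly to a fixed element of $\operatorname{Sp}(2n,\mathbb{R})$, so derivatives of $a'$ and $b'$ stay uniformly controlled by those of $a$ and $b$. Making this precise on a symbol class where the standard Moyal expansion holds with uniform remainder is the only technical obstacle; the algebraic structure of the argument is otherwise a routine transport along $s$.
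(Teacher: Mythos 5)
Your proof is correct and follows essentially the same route as the paper's: reduce the $\Omega$-commutator to the ordinary Moyal commutator via $M_{s}$ (Lemma~\ref{lemma3}), apply the Moyal expansion, transport the Poisson bracket back through the symplectomorphism $s$, and finish with the asymptotic~(\ref{abh}). Two small differences are worth noting. Where the paper establishes the identity $\{a',b'\}(z)=\{a,b\}_{\Omega}(sz)$ by an explicit chain-rule computation with the vector fields $X_{a,\Omega}=\Omega\partial_{z}a$ and the relation $Js^{T}=s^{-1}\Omega$, you simply invoke the covariance of Poisson brackets under the symplectomorphism $s$ (and you sharpen the Moyal remainder to $O(\hbar^{3})$, though $O(\hbar^{2})$ already suffices for the stated conclusion). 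More importantly, you explicitly flag a subtlety the paper passes over in silence: $s$ itself depends on $\hbar$ through $\Omega$, so the uniformity in $\hbar$ of the constants in the Moyal remainder must be checked; you correctly observe that the hypotheses~(\ref{oh}) force $\Omega\to J$ and hence $s$ to converge, keeping the derivatives of $a'$, $b'$ under control. That observation is a genuine improvement in rigor over the paper's argument.
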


\begin{proof}
We have, since $M_{s}$ is linear,%
\begin{align*}
M_{s}(a\star_{\Omega}b-b\star_{\Omega}a)  &  =\sqrt{|\det s|}(a^{\prime}\star
b^{\prime}-b^{\prime}\star a^{\prime})\\
&  =\sqrt{|\det s|}(i\hbar\{a^{\prime},b^{\prime}\}+O(\hbar^{2}))
\end{align*}
where, as usual, $a^{\prime}(z)=a(sz)$ and $b^{\prime}(z)=b(sz$). Now, by the
chain rule and the relation $Js^{T}=s^{-1}\Omega$,
\begin{align*}
X_{a^{\prime}}  &  =Js^{T}\partial_{z}a(sz)=s^{-1}\Omega\partial_{z}a(sz)\\
X_{b^{\prime}}  &  =Js^{T}\partial_{z}b(sz)=s^{-1}\Omega\partial
_{z}b(sz)\text{ }%
\end{align*}
and hence, using the identities $\{a^{\prime},b^{\prime}\}=JX_{a^{\prime}%
}\cdot X_{b^{\prime}}$ and $(s^{T})^{-1}J^{-1}s^{-1}=\Omega^{-1}$,
\begin{align*}
\sqrt{|\det s|}\{a^{\prime},b^{\prime}\}  &  =-\sqrt{|\det s|}Js^{-1}%
\Omega\partial_{z}a(sz)\cdot s^{-1}\Omega\partial_{z}b(sz)\\
&  =\sqrt{|\det s|}\partial_{z}a(sz)\cdot\Omega\partial_{z}b(sz)\\
&  =-\sqrt{|\det s|}\Omega\partial_{z}a(sz)\cdot\Omega^{-1}(\Omega\partial
_{z}b(sz))\\
&  =-M_{s}\omega(X_{a,\Omega},X_{b,\Omega}).
\end{align*}
We have thus proven that%
\[
M_{s}(a\star_{\Omega}b-b\star_{\Omega}a)=-i\hbar M_{s}\omega(X_{a,\Omega
},X_{b,\Omega})+O(\hbar^{2}).
\]
From this and (\ref{abh}), Eqn. (\ref{poisdef}) follows.
\end{proof}

More generally, using the approach above, it is easy to show that
\[
f\star_{\Omega}g=\sum_{k\geq0}B_{k,\Omega}(f,g)\hbar^{k}%
\]
where the $B_{k,\Omega}(f,g)$ are bi-differential operators. In particular,
$B_{0,\Omega}(f,g)=1$ and $B_{1,\Omega}(f,g)=\frac{i}{2}\left\{  f,g\right\}
$, but for $k\geq2$ they differ from those of the usual Moyal product. We
leave these technicalities aside in this article.

\section{The Intertwining Property}

In \cite{GOLU1} two of us defined a family of partial isometries $W_{\phi
}:L^{2}(\mathbb{R}^{n})\longrightarrow L^{2}(\mathbb{R}^{2n})$ indexed by
$\mathcal{S}(\mathbb{R}^{n}),$ and intertwining the operator $\widetilde
{A}=\widetilde{A}_{J}$ and the usual Weyl operator $\widehat{A}$:%
\begin{equation}
\widetilde{A}W_{\phi}=W_{\phi}\widehat{A}\text{ \ and \ }W_{\phi}^{\ast
}\widetilde{A}=\widehat{A}W_{\phi}^{\ast}\text{.} \label{fund}%
\end{equation}
These intertwiners are defined by%

\begin{equation}
W_{\phi}\psi=(2\pi\hbar)^{n/2}W(\psi,\phi) \label{wpf}%
\end{equation}
where $W(\psi,\phi)$ is the cross-Wigner distribution:%
\begin{equation}
W(\psi,\phi)(z)=\left(  \tfrac{1}{2\pi\hbar}\right)  ^{n}\int_{\mathbb{R}^{n}%
}e^{-\frac{i}{\hbar}p\cdot y}\psi(x+\tfrac{1}{2}y)\overline{\phi(x-\tfrac
{1}{2}y)}dy \label{wifi}%
\end{equation}
and $W_{\phi}^{\ast}$ denotes the adjoint of $W_{\phi}$.

The following result is an extension of Proposition 2 in \cite{GOLU1}.

\begin{theorem}
\label{th1}Let $s$ be a linear automorphism of $\mathbb{R}^{2n}$ such that
$s^{\ast}\omega=\sigma$. (i) The mappings $W_{s,\phi}:\mathcal{S}%
(\mathbb{R}^{n})\longrightarrow S(\mathbb{R}^{2n})$ defined by the formula:%
\begin{equation}
W_{s,\phi}=M_{s}^{-1}W_{\phi} \label{ws}%
\end{equation}
are partial isometries $L^{2}(\mathbb{R}^{n})\longrightarrow L^{2}%
(\mathbb{R}^{2n})$ and we have%
\begin{equation}
\widetilde{A}_{\Omega}W_{s,\phi}=W_{s,\phi}\widehat{A^{\prime}}\text{ \ and
\ }W_{s,\phi}^{\ast}\widetilde{A}_{\Omega}=\widehat{A^{\prime}}W_{s,\phi
}^{\ast} \label{sfund}%
\end{equation}
where $\widehat{A^{\prime}}$ is the operator with Weyl symbol $a^{\prime
}=a(sz)$ and $W_{s,\phi}^{\ast}$ denotes the adjoint of $W_{s,\phi}$. (ii) The
replacement of $s$ by $s^{\prime}$ such that $\sigma=s^{\prime\ast}\omega$ is
equivalent to the replacement of $\widehat{A^{\prime}}$ by $\widehat
{S_{\sigma}}^{-1}\widehat{A^{\prime}}\widehat{S_{\sigma}}$ and of $W_{s,\phi}$
by $W_{s,\widehat{S_{\sigma}^{-1}}\phi}\widehat{S_{\sigma}^{-1}}$ where
$\widehat{S_{\sigma}}$ is any of the two operators in the metaplectic group
$\operatorname*{Mp}(2n,\mathbb{R})$ whose projection on $\operatorname*{Sp}%
(2n,\mathbb{R})$ is $s_{\sigma}=s^{-1}s^{\prime}$.
\end{theorem}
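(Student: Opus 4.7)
The plan for part (i) is to chain Lemma~\ref{lemma3} with the intertwining property (\ref{fund}) already established in \cite{GOLU1}. Concretely, definition (\ref{ws}) gives $W_{s,\phi}=M_s^{-1}W_\phi$, and since $M_s$ is unitary on $L^2(\mathbb{R}^{2n})$ while $W_\phi$ is a partial isometry $L^2(\mathbb{R}^n)\to L^2(\mathbb{R}^{2n})$, the composition $W_{s,\phi}$ is immediately a partial isometry with the same initial/final space structure. To get the first identity in (\ref{sfund}), I would write
\[
\widetilde{A}_\Omega W_{s,\phi}=\widetilde{A}_\Omega M_s^{-1}W_\phi
 = M_s^{-1}\bigl(M_s\widetilde{A}_\Omega M_s^{-1}\bigr)W_\phi
 = M_s^{-1}\widetilde{A'}\,W_\phi
 = M_s^{-1}W_\phi\,\widehat{A'}
 = W_{s,\phi}\,\widehat{A'},
\]
where the third equality is Lemma~\ref{lemma3} and the fourth is (\ref{fund}). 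The second identity in (\ref{sfund}) is then obtained by taking adjoints, using $W_{s,\phi}^*=W_\phi^*M_s$ and the self-adjointness patterns already encoded in Proposition~2 of \cite{GOLU1}.

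For part (ii), the plan is to exploit two facts about symplectic/metaplectic covariance. Given another $s'$ with $s'^{\ast}\omega=\sigma$, write $s'=s\,s_\sigma$ with $s_\sigma\in\operatorname{Sp}(2n,\mathbb{R})$ as in the statement. At the level of symbols, $a(s'z)=a(s\,s_\sigma z)=a'(s_\sigma z)$, so by the standard metaplectic covariance of Weyl calculus the Weyl operator attached to $a\circ s'$ is exactly $\widehat{S_\sigma}^{-1}\widehat{A'}\widehat{S_\sigma}$. At the level of intertwiners, I would use the multiplicative rule $M_{s'}=M_{s_\sigma}M_s$ coming from $M_aM_b=M_{ba}$, so that $W_{s',\phi}=M_s^{-1}\bigl(M_{s_\sigma}^{-1}W_\phi\bigr)$, and then invoke metaplectic covariance of the cross-Wigner distribution, $W(\widehat{S_\sigma}\psi,\widehat{S_\sigma}\phi)(z)=W(\psi,\phi)(s_\sigma^{-1}z)$, to rewrite $M_{s_\sigma}^{-1}W_\phi$ as $W_{\widehat{S_\sigma^{-1}}\phi}\widehat{S_\sigma^{-1}}$ (up to the convention fixing the metaplectic lift). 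Combining these gives the claimed transformation rules.

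The step I expect to be the main technical obstacle is part (ii), specifically tracking sign/inverse conventions through the metaplectic covariance of $W(\psi,\phi)$ and matching them with the definition (\ref{ws}) of $W_{s,\phi}$. Consistency must also be checked with the twofold ambiguity in the metaplectic lift $\widehat{S_\sigma}$: both choices differ by a sign, but this sign cancels between $W_{s,\widehat{S_\sigma^{-1}}\phi}$ and $\widehat{S_\sigma^{-1}}$ in the product, which is what makes the statement well-defined. Once these bookkeeping issues are handled, (ii) becomes essentially a restatement of metaplectic covariance and requires no new analytic input beyond what is used in part (i).
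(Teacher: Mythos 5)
Your approach is essentially the paper's: part (i) uses the chain $\widetilde{A}_\Omega W_{s,\phi}=M_s^{-1}\widetilde{A'}W_\phi=M_s^{-1}W_\phi\widehat{A'}=W_{s,\phi}\widehat{A'}$ via Lemma~\ref{lemma3} and (\ref{fund}), and part (ii) combines the decomposition $s'=s\,s_\sigma$ with symplectic covariance of the Weyl calculus at the symbol level and metaplectic covariance of the cross-Wigner distribution, which is exactly the authors' route. One caveat on a detail: obtaining the second identity of (\ref{sfund}) ``by taking adjoints'' of the first would require $\widetilde{A}_\Omega$ and $\widehat{A'}$ to be self-adjoint, which is not assumed here (the symbol $a$ need not be real). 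What you should do instead --- and what your mention of $W_{s,\phi}^*=W_\phi^* M_s$ already points toward --- is run the symmetric chain
\[
W_{s,\phi}^*\widetilde{A}_\Omega \;=\; W_\phi^* M_s\widetilde{A}_\Omega \;=\; W_\phi^*\widetilde{A'}M_s \;=\; \widehat{A'}W_\phi^* M_s \;=\; \widehat{A'}W_{s,\phi}^*,
\]
using Lemma~\ref{lemma3} together with the \emph{second} intertwining relation of (\ref{fund}), which holds for arbitrary symbols. Your remark that the twofold metaplectic ambiguity in $\widehat{S_\sigma}$ cancels in the product $W_{s,\widehat{S_\sigma^{-1}}\phi}\widehat{S_\sigma^{-1}}$ (because $W(\,\cdot\,,-\phi)=-W(\,\cdot\,,\phi)$) is correct and makes explicit a point the paper leaves implicit.
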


\begin{proof}
(i) We have, using the first formula (\ref{fund}) and definition (\ref{ws}),%
\[
\widetilde{A}_{\Omega}W_{s,\phi}=M_{s}^{-1}\widetilde{A^{\prime}}M_{s}%
(M_{s}^{-1}W_{\phi})
\]
that is,
\[
\widetilde{A}_{\Omega}W_{s,\phi}=M_{s}^{-1}(\widetilde{A^{\prime}}W_{\phi
})=M_{s}^{-1}W_{\phi}\widehat{A^{\prime}}=W_{s,\phi}\widehat{A^{\prime}};
\]
the equality $W_{s,\phi}^{\ast}\widetilde{A}_{\Omega}=\widehat{A^{\prime}%
}W_{s,\phi}^{\ast}$ is proven in a similar way. That $W_{s,\phi}$ is a partial
isometry is obvious since $W_{\phi}$ is a a partial isometry and $M_{s}$ is
unitary. (ii) We have $s_{\sigma}=s^{-1}s^{\prime}\in\operatorname*{Sp}%
(2n,\mathbb{R})$ hence $a(s^{\prime}z)=a(ss_{\sigma}z)=a^{\prime}(s_{\sigma
}z)$. Let $\widehat{A^{\prime\prime}}$ be the operator with Weyl symbol
$a^{\prime\prime}(z)=a^{\prime}(s_{\sigma}z)$. In view of the symplectic
covariance property of Weyl operators we have $\widehat{A^{\prime\prime}%
}=\widehat{S_{\sigma}}^{-1}\widehat{A^{\prime}}\widehat{S_{\sigma}}$.
Similarly,
\begin{align*}
W_{s^{\prime},\phi}\psi(z)  &  =M_{s}^{-1}(M_{s}M_{s^{\prime}}^{-1}W_{\phi
})\psi(z)\\
&  =M_{s}^{-1}W_{\phi}\psi(s_{\sigma}^{-1}z)\\
&  =W_{s,\phi}\psi(s_{\sigma}^{-1}z)
\end{align*}
hence $W_{s^{\prime},\phi}\psi=W_{s,\widehat{S_{\sigma}}\phi}\widehat
{S_{\sigma}}\psi$ in view of the symplectic covariance of the cross-Wigner
transform (\ref{wifi}); the result follows.
\end{proof}

An important property of the mappings $W_{s,\phi}$ is that they can be used to
construct orthonormal bases in $L^{2}(\mathbb{R}^{2n})$ starting from an
orthonormal basis in $L^{2}(\mathbb{R}^{n})$.

\begin{proposition}
\label{OB}Let $(\phi_{j})_{j\in F}$ be an arbitrary orthonormal basis of
$L^{2}(\mathbb{R}^{n})$; the functions $\Phi_{j,k}=W_{s,\phi_{j}}\phi_{k}$
with $(j,k)\in F\times F$ form an orthonormal basis of $L^{2}(\mathbb{R}%
^{2n})$, and we have $\Phi_{j,k}\in\mathcal{H}_{j}\cap\mathcal{H}_{k}$, with
$\mathcal{H}_{j} =W_{s,\phi_{j}} (L^{2} (\mathbb{R}^{n}))$.
\end{proposition}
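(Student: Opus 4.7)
The plan is to reduce the statement to the analogue for the unmodified intertwiners $W_{\phi_j}$ (which is Proposition 2 in \cite{GOLU1}) and then transport the resulting basis through the unitary operator $M_s^{-1}$. Since $W_{s,\phi_j}=M_s^{-1}W_{\phi_j}$ by (\ref{ws}), one has $\Phi_{j,k}=M_s^{-1}(W_{\phi_j}\phi_k)$, so the main ONB assertion will reduce to the corresponding statement for the family $(W_{\phi_j}\phi_k)_{(j,k)\in F\times F}$, combined with the fact that $M_s^{-1}=M_{s^{-1}}$ is unitary on $L^2(\mathbb{R}^{2n})$ (which is an immediate change of variables in the definition (\ref{ms})).

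For the unmodified system, orthonormality is a one-line consequence of the Moyal identity
\[
\langle W(\psi,\phi),W(\psi',\phi')\rangle_{L^2(\mathbb{R}^{2n})}=(2\pi\hbar)^{-n}\langle\psi,\psi'\rangle\overline{\langle\phi,\phi'\rangle},
\]
applied to $(\phi_k,\phi_j,\phi_{k'},\phi_{j'})$, which together with (\ref{wpf}) yields $\langle W_{\phi_j}\phi_k,W_{\phi_{j'}}\phi_{k'}\rangle=\delta_{jj'}\delta_{kk'}$. Completeness is the substantive part: it amounts to the resolution of identity $\sum_j W_{\phi_j}W_{\phi_j}^{\ast}=I_{L^2(\mathbb{R}^{2n})}$, which follows from Moyal's identity via a Parseval expansion of any $\Psi\in L^2(\mathbb{R}^{2n})$ against the $(\phi_j)$ in the second slot of the cross-Wigner transform. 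Transporting through the unitary $M_s^{-1}$ then delivers the claimed ONB property for $(\Phi_{j,k})$.

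The inclusion $\Phi_{j,k}\in\mathcal{H}_j$ is immediate from the definition $\mathcal{H}_j=W_{s,\phi_j}(L^2(\mathbb{R}^n))$. For $\Phi_{j,k}\in\mathcal{H}_k$ I would invoke the Hermitian symmetry $W(\psi,\phi)=\overline{W(\phi,\psi)}$ of the cross-Wigner transform visible in (\ref{wifi}), which together with the real-linearity of $M_s^{-1}$ yields $\Phi_{j,k}=\overline{\Phi_{k,j}}$ and identifies $\Phi_{j,k}$ with an element of $\mathcal{H}_k$ via conjugation. The main obstacle is the completeness step in the unmodified case, which rests on the Moyal-based resolution of identity of \cite{GOLU1}; once that is granted, the rest of the proposition is a transport through a unitary plus the symmetry of the Wigner distribution.
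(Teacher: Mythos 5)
Your reduction to the $\Omega=J$ case via the unitary $M_s^{-1}$ is exactly what the paper does: the paper's proof is a one-liner citing \cite{GOLU1} and observing that $W_{s,\phi}=M_s^{-1}W_\phi$ with $M_s$ unitary. Unpacking the \cite{GOLU1} argument (Moyal identity for orthonormality, resolution of the identity $\sum_j W_{\phi_j}W_{\phi_j}^{\ast}=I$ for completeness) is fine and adds nothing the paper does not implicitly rely on.

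The part of your proposal that does not close is the inclusion $\Phi_{j,k}\in\mathcal{H}_k$. Hermitian symmetry gives $\Phi_{j,k}=\overline{\Phi_{k,j}}$, and $\Phi_{k,j}\in\mathcal{H}_k$ trivially, so what you obtain is $\Phi_{j,k}\in\overline{\mathcal{H}_k}$, the complex conjugate of the space, not $\mathcal{H}_k$ itself. To pass from $\overline{\mathcal{H}_k}$ to $\mathcal{H}_k$ you would need $\mathcal{H}_k$ to be stable under complex conjugation, and it is not: the very Moyal identity you invoke for orthonormality shows that
\[
\langle W(\psi,\phi_j),\,W(\psi',\phi_k)\rangle_{L^2(\mathbb{R}^{2n})}=(2\pi\hbar)^{-n}\langle\psi,\psi'\rangle\,\overline{\langle\phi_j,\phi_k\rangle}=0\quad\text{for }j\neq k,
\]
so the ranges $W_{\phi_j}(L^2)$ and $W_{\phi_k}(L^2)$ are mutually orthogonal, and hence so are $\mathcal{H}_j=M_s^{-1}W_{\phi_j}(L^2)$ and $\mathcal{H}_k$. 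A nonzero $\Phi_{j,k}$ therefore cannot lie in $\mathcal{H}_j\cap\mathcal{H}_k$ when $j\neq k$. In other words, the symmetry argument does not merely leave a gap; it lands in the conjugate space $\overline{\mathcal{H}_k}=M_s^{-1}\{W(\phi_k,\psi):\psi\in L^2(\mathbb{R}^n)\}$, and you should either record the membership in that form, or go back to the precise definitions and conventions of \cite{GOLU1} to see exactly which space is meant there, since the paper's own proof simply defers this point to the citation.
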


\begin{proof}
In \cite{GOLU1} the property was proven for the mappings $W_{\phi_{j}%
}=W_{I,\phi_{j}}$; the lemma follows since $W_{s,\phi}=M_{s}^{-1}W_{\phi}$ and
$M_{s}$ is unitary.
\end{proof}

\section{The $\star_{\Omega}$-Genvalue Equation:\ Spectral Results}

Let us consider the star-genvalue equation for the star-product $\star
_{\Omega}$:%
\begin{equation}
a\star_{\Omega}\Psi=\lambda\Psi; \label{stargen1}%
\end{equation}
here $a$ can be viewed as some Hamiltonian function whose properties are going
to be described, and $\Psi$ a \textquotedblleft phase-space
function\textquotedblright. Following definition (\ref{defstar}) the study of
this problem is equivalent to that of the eigenvalue equation%
\begin{equation}
\widetilde{A}_{\Omega}\Psi=\lambda\Psi\label{stargen2}%
\end{equation}
for the pseudo-differential operator $\widetilde{A}_{\Omega}$. Using the
intertwining relations (\ref{sfund}) it is easy to relate the eigenvalues of
$\widetilde{A}_{\Omega}$ to those of $\widehat{A}^{\prime}$ following the
lines in \cite{GOLU1}; for instance one sees, adapting mutatis mutandis the
proof of Theorem 4 in the reference, that the operators $\widetilde{A}%
_{\Omega}$ and $\widehat{A^{\prime}}$ have the same eigenvalues (see Theorem
\ref{theospec} below). Note that it follows from Theorem \ref{th1}(ii) that
the eigenvalues of $\widehat{A^{\prime}}$ do not depend on the choice of $s$
such that $s^{\ast}\omega=\sigma$.

\begin{theorem}
\label{theospec}The operators $\widetilde{A}_{\Omega}$ and $\widehat
{A^{\prime}}$ have the same eigenvalues. (i) Let $\psi$ be an eigenvector of
$\widehat{A^{\prime}}$: $\widehat{A^{\prime}}\psi=\lambda\psi$. Then
$\Psi=W_{s,\phi}\psi$ is an eigenvector of $\widetilde{A}_{\Omega}$
corresponding to the same eigenvalue: $\widetilde{A}_{\Omega}\Psi=\lambda\Psi
$. (ii) Conversely, if $\Psi$ is an eigenvector of $\widetilde{A}_{\Omega}$
then $\psi=W_{s,\phi}^{\ast}\Psi$ is an eigenvector of $\widehat{A^{\prime}}$
corresponding to the same eigenvalue.
\end{theorem}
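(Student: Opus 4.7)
The plan is to lift the spectral correspondence from the intertwining relations (\ref{sfund}) established in Theorem \ref{th1}, arguing exactly as in the $\Omega=J$ case treated in \cite{GOLU1}, with $W_\phi$ replaced everywhere by $W_{s,\phi}=M_s^{-1}W_\phi$. The equality of spectra will follow from the two implications (i) and (ii); the only real work is to check that the eigenvectors produced by the two transfer maps are nonzero.

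For part (i), I would start from $\widehat{A'}\psi=\lambda\psi$ and apply $W_{s,\phi}$ from the left to obtain
\[
W_{s,\phi}\widehat{A'}\psi=\lambda W_{s,\phi}\psi.
\]
By the first identity in (\ref{sfund}), the left-hand side equals $\widetilde{A}_{\Omega}(W_{s,\phi}\psi)$, so $\Psi=W_{s,\phi}\psi$ satisfies $\widetilde{A}_{\Omega}\Psi=\lambda\Psi$. To see that $\Psi\neq 0$, I would choose $\phi$ with $\|\phi\|_{L^2(\mathbb{R}^n)}=1$: then by Moyal's identity $W_\phi$ is an isometry from $L^2(\mathbb{R}^n)$ into $L^2(\mathbb{R}^{2n})$, and since $M_s^{-1}$ is unitary on $L^2(\mathbb{R}^{2n})$ the composition $W_{s,\phi}=M_s^{-1}W_\phi$ is also an isometry, so $\|\Psi\|=\|\psi\|\neq 0$.

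For part (ii), I would dualize: applying $W_{s,\phi}^*$ to $\widetilde{A}_{\Omega}\Psi=\lambda\Psi$ and using the second identity in (\ref{sfund}) gives
\[
\widehat{A'}(W_{s,\phi}^*\Psi)=\lambda(W_{s,\phi}^*\Psi),
\]
so $\psi=W_{s,\phi}^*\Psi$ is either zero or an eigenvector of $\widehat{A'}$ with eigenvalue $\lambda$. The obstacle is the non-vanishing of $\psi$, since for a given $\phi$ the eigenvector $\Psi$ might lie in the kernel of the partial isometry $W_{s,\phi}^*$. Here I would invoke Proposition \ref{OB}: fix any orthonormal basis $(\phi_j)_{j\in F}$ of $L^2(\mathbb{R}^n)$; then $\{W_{s,\phi_j}\phi_k\}_{(j,k)\in F\times F}$ is an orthonormal basis of $L^2(\mathbb{R}^{2n})$. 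Since $\Psi\neq 0$, there exist $j,k$ with
\[
\langle W_{s,\phi_j}^*\Psi,\phi_k\rangle_{L^2(\mathbb{R}^n)}=\langle\Psi,W_{s,\phi_j}\phi_k\rangle_{L^2(\mathbb{R}^{2n})}\neq 0,
\]
so $W_{s,\phi_j}^*\Psi\neq 0$ for that choice of $\phi=\phi_j$. This produces the required nonzero eigenvector of $\widehat{A'}$.

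Combining (i) and (ii) yields the equality of spectra. The main obstacle, as noted, is the non-triviality in part (ii); everything else is a direct application of (\ref{sfund}). Note also that Theorem \ref{th1}(ii) ensures the resulting spectrum of $\widehat{A'}$ is independent of the choice of symplectomorphism $s$ with $s^*\omega=\sigma$, since any two such operators are metaplectically conjugate and hence isospectral.
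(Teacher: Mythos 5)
Your proposal is correct and follows essentially the same route as the paper: both directions are obtained by applying the intertwining relations (\ref{sfund}), injectivity of $W_{s,\phi}$ handles non-vanishing in part (i), and Proposition \ref{OB} handles the non-vanishing in part (ii) (the paper phrases this as the contrapositive via the orthogonal projections $P_{s,\phi}=W_{s,\phi}W_{s,\phi}^{*}$, but the content is identical to your basis argument).
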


\begin{proof}
That every eigenvalue of $\widehat{A^{\prime}}$ also is an eigenvalue of
$\widetilde{A}_{\Omega}$ is clear: if $\widehat{A^{\prime}} \psi=\lambda\psi$
for some $\psi\neq0$ then
\[
\widetilde{A}_{\Omega}(W_{s,\phi}\psi)=W_{s,\phi}\widehat{A^{\prime}}%
\psi=\lambda W_{s,\phi}\psi
\]
and $\Psi=W_{s,\phi}\psi\neq0$ ; this proves at the same time that $W_{s,\phi
}\psi$ is an eigenvector of $\widetilde{A}_{\Omega}$ because $W_{s,\phi}$ is
injective. (ii) Assume conversely that $\widetilde{A}_{\Omega}\Psi=\lambda
\Psi$ for $\Psi\in L^{2}(\mathbb{R}^{2n})$, $\Psi\neq0$, and $\lambda
\in\mathbb{R}$. For every $\phi$ we have
\[
\widehat{A^{\prime}}W_{s,\phi}^{\ast}\Psi=W_{s,\phi}^{\ast}\widetilde
{A}_{\Omega}\Psi=\lambda W_{s,\phi}^{\ast}\Psi
\]
hence $\lambda$ is an eigenvalue of $\widehat{A^{\prime}}$ and $\psi$ an
eigenvector if $\psi=W_{s,\phi}^{\ast}\Psi\neq0$. We have $W_{s,\phi}%
\psi=W_{s,\phi}W_{s,\phi}^{\ast}\Psi=P_{s,\phi}\Psi$ where $P_{s,\phi}$ is the
orthogonal projection on the range $\mathcal{H}_{s,\phi}$ of $W_{s,\phi}$.
Assume that $\psi=0$; then $P_{s,\phi}\Psi=0$ for every $\phi\in
\mathcal{S}(\mathbb{R}^{n}),$ and hence $\Psi=0$ in view of Proposition
\ref{OB}.
\end{proof}

Let us give an application of the result above. Assume that the symbol $a$
belongs to the Shubin class $H\Gamma_{\rho}^{m_{1},m_{0}}(\mathbb{R}^{2n})$;
recall \cite{Shubin} that $a\in H\Gamma_{\rho}^{m_{1},m_{0}}(\mathbb{R}^{2n})$
($m_{0},m_{1}\in\mathbb{R}$ and $0<\rho\leq1$) if $a\in C^{\infty}%
(\mathbb{R}^{2n})$ and if there exist constants $C_{0},C_{1}\geq0$ and, for
every $\alpha\in\mathbb{N}^{n}$, $|\alpha|\neq0$, a constant $C_{\alpha}\geq
0$, such that for $|z|$ sufficiently large
\begin{equation}
C_{0}|z|^{m_{0}}\leq|a(z)|\leq C_{1}|z|^{m_{1}}\text{ \ , \ }|\partial
_{z}^{\alpha}a(z)|\leq C_{\alpha}|a(z)||z|^{-\rho|\alpha|}. \label{shu}%
\end{equation}
The following result (Shubin \cite{Shubin}, Chapter 4) is important in our context:

\begin{theorem}
\label{theshu}Let $a\in H\Gamma_{\rho}^{m_{1},m_{0}}(\mathbb{R}^{2n})$ be
real, and $m_{0}>0$. Then the formally self-adjoint operator $\widehat{A}$
with Weyl symbol $a$ has the following properties: (i) $\widehat{A}$ is
essentially self-adjoint in $L^{2}(\mathbb{R}^{n})$ and has discrete spectrum;
(ii) There exists an orthonormal basis of eigenfunctions $\phi_{j}%
\in\mathcal{S}(\mathbb{R}^{n})$ ($j=1,2,...$) with eigenvalues $\lambda_{j}%
\in\mathbb{R}$ such that $\lim_{j\rightarrow\infty}|\lambda_{j}|=\infty$.
\end{theorem}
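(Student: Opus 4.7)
The result is quoted from Shubin's monograph, so in practice one would simply cite \cite{Shubin}; nonetheless the strategy is well-defined and I sketch it now.

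The plan is to construct a parametrix inside the Shubin calculus and then read off the spectral properties via a Fredholm/compact-resolvent argument. First I would build $b\in H\Gamma_{\rho}^{-m_{0},-m_{1}}(\mathbb{R}^{2n})$ such that the Moyal $\#$-products satisfy $a\#b=1+r_{1}$ and $b\#a=1+r_{2}$ with $r_{1},r_{2}$ of arbitrarily negative Shubin order. The starting point is the hypoellipticity estimate in (\ref{shu}): the inequality $C_{0}|z|^{m_{0}}\leq|a(z)|$ makes $1/a$ well-defined outside a compact set, and the derivative bounds $|\partial^{\alpha}a|\leq C_{\alpha}|a||z|^{-\rho|\alpha|}$ guarantee that $1/a$ lies in $H\Gamma_{\rho}^{-m_{0},-m_{1}}$ after smooth truncation. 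Iterating the asymptotic expansion of the Weyl composition $\#$ in decreasing orders yields the desired two-sided parametrix.

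Next, I would introduce the Shubin--Sobolev spaces $Q^{s}(\mathbb{R}^{n})$ (the domains of $(1+|x|^{2}-\Delta)^{s/2}$) and use the parametrix to show that $\widehat{A}\colon Q^{m_{1}}\to L^{2}$ is Fredholm. Because $m_{0}>0$, the embeddings $Q^{s}\hookrightarrow Q^{t}$ for $s>t$ are compact, so operators with symbols in $\bigcap_{m}H\Gamma_{\rho}^{m}$ are regularizing; in particular $r_{1},r_{2}$ define compact operators on $L^{2}(\mathbb{R}^{n})$. Essential self-adjointness then follows in the usual way: $\widehat{A}$ is symmetric on $\mathcal{S}(\mathbb{R}^{n})$, and any $u$ with $\widehat{A}^{\ast}u=\pm i u$ belongs to $\mathcal{S}(\mathbb{R}^{n})$ by hypoelliptic bootstrapping using the parametrix $B$ (so that $u=B(\pm i u)-r u$ puts $u$ in ever higher $Q^{s}$), and symmetry then forces $u=0$; hence the deficiency indices vanish.

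Having established essential self-adjointness, I would show that the resolvent $(\widehat{A}-\lambda)^{-1}$ exists for some $\lambda\in\mathbb{C}\setminus\mathbb{R}$ and is compact, since it can be written modulo compact errors as $B(1+\text{compact})^{-1}$ and $B$ itself is compact (its symbol decays at infinity and it maps $L^{2}$ into $Q^{m_{0}}$). The spectral theorem for self-adjoint operators with compact resolvent then delivers (i) and the existence part of (ii): discrete real spectrum $\{\lambda_{j}\}$ with $|\lambda_{j}|\to\infty$ and an orthonormal eigenbasis. Schwartz regularity of each $\phi_{j}$ is the last ingredient: applying the parametrix to $\widehat{A}\phi_{j}=\lambda_{j}\phi_{j}$ gives $\phi_{j}=\lambda_{j}B\phi_{j}-r_{1}\phi_{j}$, which iteratively places $\phi_{j}$ in every $Q^{s}$, so $\phi_{j}\in\bigcap_{s}Q^{s}=\mathcal{S}(\mathbb{R}^{n})$.

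The main obstacle is the parametrix construction itself, specifically verifying that the asymptotic expansion of the Moyal $\#$-product stays within the Shubin class $H\Gamma_{\rho}$ and that the remainders are genuinely compact (not merely of lower order) on $L^{2}$. Once that symbolic calculus is in place, the spectral conclusions, including the self-adjointness and the Schwartz regularity of eigenfunctions, are standard consequences of Fredholm theory and elliptic regularity in the Shubin scale.
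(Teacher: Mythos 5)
The paper offers no proof of this theorem at all; it is stated as a quotation from Shubin's monograph (Chapter 4), exactly as you note at the outset. Your sketch is therefore not in competition with anything in the paper, but it is a faithful and essentially correct reconstruction of Shubin's own argument: the hypoellipticity bounds in (\ref{shu}) do yield a two-sided parametrix $b\in H\Gamma_{\rho}^{-m_{0},-m_{1}}$ with smoothing remainders; the scale of Shubin--Sobolev spaces $Q^{s}$ together with the compact embeddings $Q^{s}\hookrightarrow Q^{t}$ ($s>t$) gives Fredholmness, compactness of the resolvent (here $m_{0}>0$ is exactly what is used), and essential self-adjointness by annihilating the deficiency spaces through bootstrap regularity; and the same bootstrap places every eigenfunction in $\bigcap_{s}Q^{s}=\mathcal{S}(\mathbb{R}^{n})$. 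The only caveat, which you already flag, is that the real work lies in checking that the Weyl $\#$-composition expansion stays inside the class $H\Gamma_{\rho}$ and that the residual terms are genuinely smoothing rather than merely lower order; this is carried out in detail in Shubin and is precisely why the paper cites the reference rather than reproducing it.
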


It follows that:

\begin{theorem}
Let $a\in H\Gamma_{\rho}^{m_{1},m_{0}}(\mathbb{R}^{2n})$ be real, and
$m_{0}>0$. (i) The stargenvalue equation $a\star_{\Omega}\Psi=\lambda\Psi$ has
a sequence of real eigenvalues $\lambda_{j}$ such that $\lim_{j\rightarrow
\infty}|\lambda_{j}|=\infty$, and these eigenvalues are those of the operator
$\widehat{A^{\prime}}$ with Weyl\ symbol $a^{\prime}(z)=a(sz)$. (ii) The
star-eigenvectors of $a$ are in one-to-one correspondence with the
eigenvectors $\phi_{j}\in\mathcal{S}(\mathbb{R}^{n})$ of $\widehat{A^{\prime}%
}$ by the formula $\Phi_{k,j}=W_{s,\phi_{k}}\phi_{j}$.
\end{theorem}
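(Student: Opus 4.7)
The plan is to realize the stargenvalue equation as an eigenvalue problem for a Weyl operator on $L^2(\mathbb{R}^n)$, to which Shubin's spectral theorem (Theorem~\ref{theshu}) applies directly, and then to push the spectral data across the intertwiner $W_{s,\phi}$. Concretely, by Definition~\ref{defstar} the equation $a\star_{\Omega}\Psi=\lambda\Psi$ is equivalent to $\widetilde{A}_{\Omega}\Psi=\lambda\Psi$, and Theorem~\ref{theospec} identifies the spectrum of $\widetilde{A}_{\Omega}$ with that of $\widehat{A^{\prime}}$, where $a^{\prime}(z)=a(sz)$. Thus (i) will reduce to showing that Shubin's theorem applies to $\widehat{A^{\prime}}$, i.e.\ that $a^{\prime}\in H\Gamma_{\rho}^{m_{1},m_{0}}(\mathbb{R}^{2n})$, so that $\widehat{A^{\prime}}$ is essentially self-adjoint with a real discrete spectrum going to $\infty$ in modulus.

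The first step is therefore the invariance check: since $s$ is a linear automorphism of $\mathbb{R}^{2n}$, there are constants $c_{1},c_{2}>0$ with $c_{1}|z|\leq|sz|\leq c_{2}|z|$, which immediately converts the two-sided estimate $C_{0}|z|^{m_{0}}\leq|a(z)|\leq C_{1}|z|^{m_{1}}$ into the analogous estimate for $a^{\prime}$ (using $m_{0}>0$ and $m_{1}\geq m_{0}$ to absorb the scaling constants). The derivative condition is transferred by the chain rule: each $\partial_{z}^{\alpha}a^{\prime}(z)$ is a linear combination of $\partial^{\beta}a(sz)$ with $|\beta|=|\alpha|$ weighted by products of entries of $s$, so the estimate $|\partial_{z}^{\alpha}a^{\prime}(z)|\leq C_{\alpha}^{\prime}|a^{\prime}(z)||z|^{-\rho|\alpha|}$ for $|z|$ large follows from the analogous estimate for $a$ at $sz$. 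Reality of $a$ passes to $a^{\prime}$, so Theorem~\ref{theshu} delivers real eigenvalues $\lambda_{j}$ with $|\lambda_{j}|\to\infty$ and an orthonormal basis of eigenfunctions $\phi_{j}\in\mathcal{S}(\mathbb{R}^{n})$ of $\widehat{A^{\prime}}$; applying Theorem~\ref{theospec} then transports them to the same eigenvalues of $\widetilde{A}_{\Omega}$, proving (i).

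For (ii), given an orthonormal basis $(\phi_{j})_{j\geq 1}$ of eigenfunctions of $\widehat{A^{\prime}}$ with $\widehat{A^{\prime}}\phi_{j}=\lambda_{j}\phi_{j}$, I would define $\Phi_{k,j}=W_{s,\phi_{k}}\phi_{j}$. The intertwining relation (\ref{sfund}) from Theorem~\ref{th1} gives
\[
\widetilde{A}_{\Omega}\Phi_{k,j}=\widetilde{A}_{\Omega}W_{s,\phi_{k}}\phi_{j}=W_{s,\phi_{k}}\widehat{A^{\prime}}\phi_{j}=\lambda_{j}\Phi_{k,j},
\]
so each $\Phi_{k,j}$ is a star-eigenvector of $a$ with eigenvalue $\lambda_{j}$, and Proposition~\ref{OB} shows that the family $\{\Phi_{k,j}\}_{(k,j)}$ is an orthonormal basis of $L^{2}(\mathbb{R}^{2n})$. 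Conversely, any star-eigenvector $\Psi$ of $a$ with eigenvalue $\lambda$ satisfies $\widetilde{A}_{\Omega}\Psi=\lambda\Psi$, and Theorem~\ref{theospec}(ii) produces, via $W_{s,\phi}^{\ast}$, an eigenvector of $\widehat{A^{\prime}}$, pinning down $\lambda$ as some $\lambda_{j}$ and expanding $\Psi$ in the basis $\{\Phi_{k,j}\}_{k}$ of the corresponding eigenspace (decomposed across the ranges $\mathcal{H}_{s,\phi_{k}}$).

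The only real obstacle is the Shubin-class invariance under $z\mapsto sz$; everything else is a clean assembly of results already proved in the excerpt. I expect that step to be essentially a one-line two-sided bound on $|sz|$ plus a chain-rule count, but it is the place where the hypothesis that $s$ is a genuine linear automorphism (rather than some more general map) is consumed, so it deserves to be written out explicitly before invoking Shubin.
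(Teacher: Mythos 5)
Your proof is correct and takes exactly the same route as the paper's one-line proof (which simply states that the theorem is an immediate consequence of Theorems \ref{theospec} and \ref{theshu}). The only genuine content you add is the explicit verification that $a'(z)=a(sz)$ remains in $H\Gamma_{\rho}^{m_1,m_0}(\mathbb{R}^{2n})$ under the linear automorphism $s$ — a detail the paper leaves implicit but which is indeed required before Shubin's theorem can be invoked for $\widehat{A'}$.
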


\begin{proof}
It is an immediate consequence of Theorems \ref{theospec} and \ref{theshu}.
\end{proof}

\section{Concluding Remarks...}

The results using the generalized Weyl-Wigner map \cite{Dias1} seem to be
quite general since they also apply to the case of nonlinear transformations
of $\mathbb{R}^{2n}$ (for a review see also section II of \cite{dipra1}). In
particular a more general starproduct than the one of non-commutative quantum
mechanics was obtained in \cite{dipra1} (see Eqn.(23) in this reference). A
future project could be to extend the approach of the present paper to this
case. Another important topic we have not addressed in this article is the
characterization of the optimal symbol classes and function spaces associated
with the star-product $\star_{\Omega}$. As two of us have shown elsewhere
\cite{GOLU2} Feichtinger's modulations spaces (see \cite{fe06,felu06} for a
review) and the closely related Sj\"{o}strand classes \cite{sj94} are
excellent candidates in the case of Landau-type operators (which are a variant
of the operators $\widetilde{A}$ corresponding to the case $\Omega=J$). It
seems very plausible that these function spaces are likely to play an equally
important role in the theory of the star-product $\star_{\Omega}$. Another
future project concerns a discussion of the starproduct
$\star_\Omega$
and its connection to Rieffel's work in deformation quantization as outlined
in \cite{ri93-1}, and the methods introduced in \cite{LuMa} in a different context.

\begin{acknowledgement}
Maurice de Gosson has been financed by the Austrian Research Agency FWF
(Projektnummer P20442-N13). Nuno Costa Dias and Jo\~{a}o Nuno Prata have been
supported by the grant PDTC/MAT/ 69635/2006 of the Portuguese Science
Foundation (FCT). Franz Luef has been financed by the Marie Curie Outgoing
Fellowship PIOF 220464.
\end{acknowledgement}

\textbf{Author's e-mail addresses:}\bigskip

\textbf{Nuno Costa Dias\footnote{ncdias@meo.pt}}

\textbf{Maurice de Gosson\footnote{maurice.de.gosson@univie.ac.at}}

\textbf{Franz Luef\footnote{franz.luef@univie.ac.at}}

\textbf{Jo\~{a}o Nuno Prata\footnote{joao.prata@mail.telepac.pt }}

\textbf{ }

\end{document}